\newcommand{\rf}[1]{(\ref{#1})}
\newcommand{\dst}{\displaystyle}
\newcommand{\R}{\mathbb {R}}
\newcommand{\E}{\mbox{\sf{E}}}
\newcommand{\M}{\bm M}
\newcommand{\Mt}{\tilde{\M}}
\renewcommand{\L}{\mbox{$\cal L$}}
\newcommand{\A}{\bm A}
\newcommand{\B}{\bm B}
\newcommand{\D}{\bm D}
\newcommand{\I}{\bm I}
\newcommand{\T}{\bm T}
\renewcommand{\v}{\bm v}
\newcommand{\V}{\bm V}
\renewcommand{\u}{\bm u}
\newcommand{\nonu}{\nonumber}
\newcommand{\Tr}{\mbox{\sf tr\,}}
\newcommand{\e}{\eta}
\newcommand{\xd}{\xi_{\Delta}}
\renewcommand{\det}{\mbox{\sf det\,}}
\renewcommand{\leq}{\leqslant}
\renewcommand{\geq}{\geqslant}
\renewcommand{\tilde}{\widetilde}
\newcommand{\hs}{\textsc {hs}}
\newtheorem{proposition}{Proposition}
\newtheorem{lemma}{Lemma}
\begin{document}

\author{Yuri A. Godin, Stanislav Molchanov, Boris Vainberg \\
Department of Mathematics and Statistics \\
University of North Carolina at Charlotte \\
Charlotte, NC 28223, USA \\
}

\title{The effect of disorder on the wave propagation \\
in one-dimensional periodic optical systems}

\maketitle


\begin{abstract}
The influence of disorder on the transmission through periodic waveguides is studied. Using a canonical form of the transfer matrix we investigate dependence of the Lyapunov exponent $\gamma$ on the frequency $\nu$ and magnitude of the disorder $\sigma$. It is shown that in the bulk of the bands $\gamma \sim \sigma^2$, while near the band edges it has the order $\gamma  \sim \sigma^{2/3}$. This dependence is illustrated by numerical simulations.
\end{abstract}


%

\maketitle

\section{Introduction}
\label{I}

Propagation of time-harmonic electromagnetic waves in one-dimensional periodic optical waveguides can often be reduced to the eigenvalue problem
\begin{equation}
 -\frac{1}{n^2(x)}\frac{d^2 \psi}{dx^2}=\nu^2 \psi(x),
  \label{evp}
\end{equation}
where $\psi (x)$ is a Cartesian component of electric or magnetic field, $n(x)=n(x+\ell)$ is the index of refraction which we suppose to be periodic.
A way to find
the frequencies $\nu$ for which waves can propagate is by computing the so-called
characteristic or transfer matrix $\bm M$ that maps two-dimensional vector with components $\psi$ and $\psi^\prime $ at the beginning of the period to the value of that  vector at the end of the period
\begin{equation}\label{tran}
 {\bm M}\left[
\begin{array}{c}
 \psi(0) \\[2mm]
\psi^\prime (0)
\end{array}
\right] =
\left[
\begin{array}{c}
 \psi(\ell) \\[2mm]
 \psi^\prime (\ell)
\end{array}
\right].
\end{equation}

Matrix ${\bm M}$ is an analytic function of the frequency $\nu$ and $\det M = 1$. Propagation frequencies $\nu$ of the problem \rf{evp} satisfy the condition
\begin{equation}
 \left| \Tr {\bm M} \right| < 2.
 \label{tr}
\end{equation}
If \rf{tr} does not hold for some frequency $\nu$ then the corresponding wave cannot propagate. The spectrum of operator \rf{evp} (which is the closure of the set $\{\nu^2\}$ for all propagating frequencies) has a bandgap structure with the bands defined by $\left| \Tr {\bm M} \right| \leq 2$.

In reality, however, periodic waveguides contain finite number of periods $N$. Equation \rf{evp} in this case becomes
\begin{equation}
 -\frac{1}{n_{N}^2(x)}\frac{d^2 \psi}{dx^2}=\nu^2 \psi(x),
  \label{evpn}
\end{equation}
where
\begin{equation}
 n_{N}(x) = \left\{
 \begin{array}{cl}
 1, & x \notin [0, N\ell], \\[2mm]
 n(x), & x \in [0,N\ell].
 \end{array}
 \right.
\end{equation}
Such waveguide supports any frequency $\nu \geq 0$, but transmission for different
frequencies varies. The transmission and reflection coefficients $t_N(\nu)$ and $r_N(\nu)$ are determined by solving the scattering problem
\begin{equation}
 \frac{d^2 \psi}{dx^2}+\nu^2n_{N}^2(x) \psi(x)=0
\end{equation}
with
\begin{equation}
\psi (x)=\left\{
\begin{array}{ll}
e^{i\nu x}+r_N\, e^{-i\nu x}, & x<0, \\
t_N \,e^{i\nu x}, & x>N\ell.
\end{array}
\right.
\label{t}
\end{equation}
The transfer matrix  over the interval $(0,N\ell)$ is ${\bm M}(N\ell)= {\bm M}^N $. The transfer matrix $\M_P$ in the Pr\"{u}ffer form maps two-dimensional vector with components $\psi$ and $\psi^\prime/\nu $ at the beginning of the interval to the value of that vector at the end of the interval, i.e.
\[
{\bm M_P}(N\ell)= \T^{-1}{\bm M}^N \T,~~\T =\left[
\begin{array}{cc}
|\nu|^{-1/2}& 0 \\
0 & |\nu|^{1/2}
\end{array}\right].
\]
This matrix defines the transmission coefficient \cite{LGP88,MV07}
\begin{equation}
|t_N |^{2}=\frac{4}{\left\| {\bm M_P^N}(N\ell)\right\| _{\hs}^{2}+2}=\frac{4}{\left\| \T^{-1} {\bm M}^N \T\right\| _{\hs}^{2}+2},
\label{tn}
\end{equation}
where $\|{\bm M}\|_{\hs}$ is the Hilbert-Schmidt norm of the matrix ${\bm M}=[M_{i,j}]$
\begin{equation*}
\left\| {\bm M}\right\| _{\hs}^{2}=\Tr{\bm M}{\bm M}^{\ast}=\sum_{i,j=1}^{2} M_{i,j}^{2}.
\end{equation*}

Random imperfections in periodic waveguides can significantly affect their
properties, especially near degenerate band edge, see \cite{FV06,FV06a}.
This paper deals with the influence of disorder
on the propagation properties of the waveguide. Let $\Mt_k$, $k=1,2,\ldots,N$ be
transfer matrices of the disordered waveguide on the periods $[(k-1)\ell, k\ell]$. Then formula \rf{tn} becomes
\begin{equation}
 |t_N |^{2}=\frac{4}{\left\| \T^{-1}\Mt_N \Mt_{N-1} \ldots \Mt_1 \T\right\| _{\hs}^{2}+2}.
\label{tnt}
\end{equation}
The norm of the product of random matrices is related to the Lyapunov exponent
\begin{equation}
 \gamma =  \lim_{N \to \infty} \frac{\ln\| \Mt_N \Mt_{N-1} \ldots \Mt_1\| _{\hs}}{N}.
\label{le}
\end{equation}
If this limit exists then
\begin{equation}
| t_N | \sim e^{-\gamma N}, \quad N \to \infty.
 \label{t-gamma}
\end{equation}

Thus, the Lyapunov exponent allows one to estimate the energy transmitted through $N$ periods of the waveguide. In the paper we study $\gamma$ as a function of disorder strength $\sigma$ and show that it has different behavior depending on the location of frequency $\nu$ in the band.

We will consider two types of disorder. Disorder of the first type concerns perturbation of physical properties of
a periodic waveguide and is described by the equation
\begin{equation}
 -\frac{d^2 \psi}{dx^2}+ \left[ \nu^2 A(x) + B(x)\right] \psi + \sigma \frac{\xd (x)}{\sqrt\Delta} \psi = 0, ~~A(x+\ell)=A(x),~B(x+\ell)=B(x),
\end{equation}
where $\xd$ is a stationary random noise with correlation length $\Delta \ll 1,$ $ \E \,\xd =0$,
$ |\xd|<1$, and $\sigma\ll\sqrt\Delta,$ i.e.
$\frac{\xd (x)}{\sqrt\Delta}$ converges in law to the white noise as $\Delta \to 0$. Disorder of the second type is related to perturbation of the length of the period.

We shall show that in both cases matrices $\Mt_k$ can be represented (c.f. \cite{GMV07}) as a
multiplicative perturbation of $\M$
\begin{equation}
 \tilde{\M}_k = {\M}\left( \I + \sigma \,\V_k + O(\sigma^2)\right),
 \label{Mt1}
\end{equation}
where $\Tr \V_k=0$:
\begin{equation}\label{vk}
\V_k = \left[
\begin{array}{rr}
\xi_k & \e_k \\[2mm]
\zeta_k & -\xi_k
\end{array}
\right], \quad k \geq 1,
\end{equation}
and entries $\xi_k, \e_k,\zeta_k$ of $\V_k$ in the limit as $\Delta \to 0$ form Gaussian random vectors independent for different $k$.
The covariance matrix
\begin{equation}
{\B_\Delta} = \E \left[
\begin{array}{ccc}
\xi_k^2 & \xi_k \eta_k & \xi_k \zeta_k \\[2mm]
\xi_k \eta_k & \eta_k^2 & \eta_k \zeta_k \\[2mm]
\xi_k \zeta_k  & \eta_k \zeta_k & \zeta_k^2
\end{array}
\right]
\label{B}
\end{equation}
of elements $\xi_k , \e_k ,
\zeta_k$ is non-degenerate for small $\Delta>0$; moreover, the limit $\B_0=\lim_{\Delta \to 0}\B_\Delta$ exists and
$\det{\B_0} > 0$.
Later we will refer mostly to \rf{Mt1} rather than a specific form of the equation. Thus our results can be applied to more general periodic media with small and short correlated disorder.


The paper contains the following results. Section 2 concerns the deterministic problem. We discuss bandgap structure of the spectrum of the periodic problem \rf{evp} and properties of the transfer matrix $\M$ near the fixed band edge. We provide an example showing that unlike the Schr\"{o}dinger equation, the gaps of the optical equation \rf{evp} are located irregularly and their length does not go to zero. After that throughout the paper we will consider $\nu$ only in a fixed band. Then we show that the transfer matrix in a neighborhood of the band edge can be reduced to a simple canonical form defined by equation \rf{evp} with $n=1$.
In Section 3 we show that the canonical representation of the transfer
matrix holds for two major cases: the disorder of the refraction index and length of the period.
 Then in Section 4 we estimate the Lyapunov exponent of randomly perturbed periodic waveguide by
 the Lyapunov exponent of the simplest model -- the Schr\"{o}dinger equation with white noise
 potential. Section 5 contains asymptotic analysis of the white noise model for the Schr\"{o}dinger
 equation and derivation of expression for the Lyapunov exponent. We show that $\gamma\sim \sigma^2$
 in the bulk of the band and $\gamma\sim \sigma^{2/3}$ near non-degenerate band edges,
 and present results of numerical simulations. This asymptotics for the Schr\"{o}dinger equation leads to a similar estimate for the optical problem due to results of Section 4.

\section{Spectral properties of periodic optical waveguides}
\label{II}

We start with an example showing the main difference between the optical and Schr\"{o}dinger equations.
Let the index of refraction $n(x)$ be $\ell$-periodic, piecewise constant with two different values $n_1, n_2$ on two successive subintervals of length $l_1, l_2$ on the interval of periodicity $ \ell=l_1+l_2$.
Then
\begin{equation}
 \M = \M_2 \M_1,
\end{equation}
where
\begin{equation}
 \M_i = \left[
 \begin{array}{cc}
  \cos ( n_{i} l_i \nu) & \dst \frac{\sin ( n_{i} l_i \nu)}{n_{i}\nu} \\[2mm]
  -n_{i}\nu\sin ( n_{i} l_i \nu) & \cos ( n_{i} l_i \nu)
 \end{array}
 \right], \;i=1,2.
\end{equation}
Hence
\begin{align}
 \Tr \M  &= 2\cos ( n_{1} l_1 \nu) \cos ( n_{2} l_2 \nu) \nonu\\[2mm]
&-\left(\frac{n_1}{n_2} + \frac{n_2}{n_1} \right)\sin ( n_{1} l_1 \nu ) \sin ( n_{2} l_2 \nu) \nonu \\[2mm]
&=A\cos \left( a\nu \right) - B\cos \left(b\nu\right),
\end{align}
with $\dst A = \frac{1}{2}\left(\frac{n_1}{n_2} + \frac{n_2}{n_1} \right)+1, \;
B = \frac{1}{2}\left(\frac{n_1}{n_2} + \frac{n_2}{n_1} \right)-1,\; a = n_{1}l_{1} + n_{2}l_{2}, b = n_{1}l_{1} - n_{2}l_{2}.$ If numbers $n_{1}l_{1}$
and $n_{2}l_{2}$ are rationally independent then almost periodic function $f(\nu)=\Tr \M $ oscillates along the $\nu$ axis between $\dst A+B = \frac{n_1}{n_2} + \frac{n_2}{n_1} > 2$ and $-(A+B) < -2$. In this case there are infinitely many
gaps in the spectrum and their length does not vanish as $\nu \to \infty$. In fact, there are values
 $\nu=\nu_k \rightarrow \infty$ as $i\rightarrow \infty,$ such that $\cos a\nu_k=1, ~\cos b\nu_k \approx -1,~f(\nu_k)>\frac{1}{2}(A+B+2)>2$ and $f_{\nu}$ is bounded uniformly in $\nu$.
At the same time the gaps can be
arbitrary small, see Fig. \ref{disc_op}.

\begin{figure}[ht]
\begin{center}
\includegraphics[width=0.8\textwidth, angle=0]{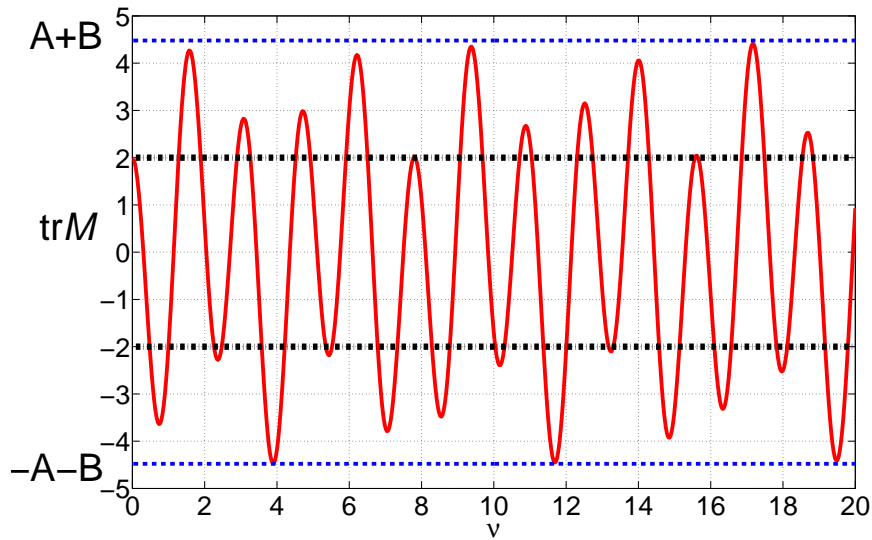}
\end{center}
\caption{Graph of the discriminant curve of the periodic optical waveguide with $l_1=2,\;l_2=0.2,\;n_1=\sqrt{2},\;n_2=6$ oscillates between $\dst A+B = \frac{n_1}{n_2} + \frac{n_2}{n_1}$ and $-(A+B)$ exhibiting infinite number of bands and gaps as $\nu \to \infty$.}
\label{disc_op}
\end{figure}

We remind that in the case of periodic Schr\"{o}dinger equation
\begin{equation}
 -\frac{d^2 \psi}{dx^2} + V(x)\psi(x)=\nu^2 \psi(x), \quad V(x)=V(x+\ell),
  \label{schr}
\end{equation}
the gaps are located periodically (see Fig. \ref{disc_sh}) and asymptotically vanishing \cite{M86}. For the
$n$-th gap $g_n$ one has
\begin{equation}
 g_n = \left( \frac{\pi n}{\ell} + \epsilon_n, \frac{\pi n}{\ell} + \delta_n, \right), \quad \epsilon_n, \delta_n \to 0 ~~\text{ as} ~~ n \to \infty.
\end{equation}
Asymptotic behavior of $\epsilon_n,\delta_n,$ depends on the smoothness of potential $V(x)$ (see \cite{M86}).

\begin{figure}[ht]
\begin{center}
\includegraphics[width=0.8\textwidth, angle=0]{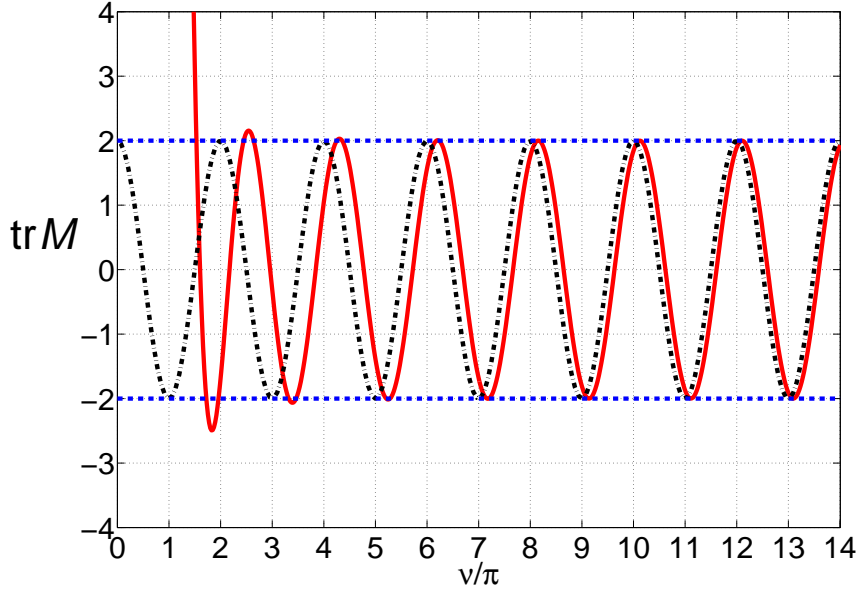}
\end{center}
\caption{Graph of the discriminant curve of the periodic Schr\"{o}dinger equation (solid line) with $l_1=2,\;l_2=0.2,\;n_1=4,\;n_2=9$. Dashed line corresponds to $2\cos \nu \ell$.}
\label{disc_sh}
\end{figure}

Our next result concerns the structure of the transfer matrix near the band edges. Let us consider one of the spectral
bands $\beta = [\nu_0, \nu_1]$ and analyze the transfer matrix $\M$ as a function of $\nu$ when $\nu \to \nu_0$. The case
$\nu \to \nu_1$ can be considered similarly.
Inside the band $\beta$ we have
\begin{equation}
 \left| \frac{1}{2}\, \Tr \M (\nu) \right | < 1.
\end{equation}
The eigenvalues of $\M$ have the form $e^{\pm i \omega},~\omega = \omega (\nu),$ and hence
\begin{equation}\label{cos}
\cos \omega = \frac{1}{2}\, \Tr \M (\nu).
\end{equation}
This matrix can be reduced to the diagonal form for any $\nu$ strictly inside the band.
At the band edges, however, both eigenvalues of $\M$ equal either $+1$ or $-1$. Throughout
the paper we consider only non-degenerated band edges, i.e., we assume that the derivative of the function
$\Tr \M (\nu)$ is not zero at the band edge (where $|\Tr \M (\nu)|=2$). It means that we do not consider points
like $\nu=7.755$ in Fig. 1. Equation \rf{cos} immediately implies that $\omega$ is an analytic function
of $\sqrt{\nu-\nu_0}$  in a neighborhood of a non-degenerate band edge $\nu=\nu_0$. It is also important that
the transfer matrix cannot be diagonalized at a non-degenerate band edge, but rather reduced to the Jordan form.

The simplest example of the discussed situation is given by
the equation $\psi^{\prime \prime} + \omega^2 \psi = 0$, {\bf $0 \leq x \leq 1$ }. The transfer matrix for that equation is
\begin{equation}
 \M(\omega) = \left[
 \begin{array}{cc}
 \cos \omega & \frac{\sin \omega}{\omega} \\[2mm]
 -\omega \sin \omega & \cos \omega
 \end{array}
 \right],
\end{equation}
and it turns into the Jordan block at the non-degenerate band edge $\omega=0$. The next Proposition
shows that arbitrary transfer matrix in a neighborhood of a non-degenerate band edge can be reduced to
that simple form uniformly in $\omega$.
\begin{proposition}
\label{prop1}
Let $\M=\M(\nu)$ be a transfer matrix of equation \rf{evp} in a neighborhood of a non-degenerate band edge
$\nu=\nu_0$, and let $e^{\pm i\omega}$ be
the eigenvalues of  $\M(\nu),~\omega=\omega (\nu),~\omega (\nu_0)=0.$
Then there exists an analytic matrix-valued function $\D (\omega),~\det \D \neq 0$, such that
\begin{equation}
 \widehat{\M}=\D^{-1} \M(\nu) \D = \left[
 \begin{array}{cc}
 \cos \omega & \frac{\sin \omega}{\omega}\\
 -\omega \sin \omega & \cos \omega
 \end{array}
 \right].
 \label{p1}
\end{equation}
\end{proposition}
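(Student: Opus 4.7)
The plan is to exploit the Cayley--Hamilton identity $\M^2 - 2\cos\omega\,\M + \I = 0$ (shared by $\widehat{\M}$) together with an analyticity improvement coming from the non-degeneracy of the band edge, and then to write $\D$ down explicitly. First I would perform a constant similarity so that $\M(\nu_0)$ coincides with its Jordan form $\widehat{\M}(0) = \begin{pmatrix} 1 & 1 \\ 0 & 1 \end{pmatrix}$. Since $2\cos\omega = \Tr\M(\nu)$ and $(\Tr\M)'(\nu_0)\neq 0$, the relation inverts analytically as $\nu-\nu_0 = \omega^2 g(\omega^2)$ with $g(0)\neq 0$; hence $\M$, viewed as a function of $\omega$, is analytic in $\omega^2$, i.e.\ an even function of $\omega$.

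Next I would introduce the traceless matrices $\S = \M - \cos\omega\,\I$ and $\widehat{\S} = \widehat{\M} - \cos\omega\,\I$; Cayley--Hamilton gives $\S^2 = \widehat{\S}^2 = -\sin^2\omega\,\I$. Rescale to $N(\omega) = (\omega/\sin\omega)\S$ and $\widehat{N}(\omega) = (\omega/\sin\omega)\widehat{\S} = \begin{pmatrix} 0 & 1 \\ -\omega^2 & 0 \end{pmatrix}$. Both are analytic in $\omega^2$, satisfy $N^2 = \widehat{N}^2 = -\omega^2\,\I$, and
\[
\M = \cos\omega\,\I + \tfrac{\sin\omega}{\omega}\,N, \qquad \widehat{\M} = \cos\omega\,\I + \tfrac{\sin\omega}{\omega}\,\widehat{N},
\]
so it is enough to find an analytic invertible $\D(\omega)$ with $\D^{-1} N \D = \widehat{N}$.

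Writing $N = \begin{pmatrix} p & q \\ r & -p \end{pmatrix}$, the identification $N(0) = \M(\nu_0) - \I = \begin{pmatrix} 0 & 1 \\ 0 & 0 \end{pmatrix}$ gives $p(0)=0$, $q(0)=1$, $r(0)=0$, with $p,q,r$ even analytic functions of $\omega$; consequently $p$ and $r$ are divisible by $\omega^2$ as analytic functions, and $p/\omega^2$, $r/\omega^2$ are analytic. Taking
\[
\D(\omega) = \begin{pmatrix} 1 & -p/\omega^2 \\ 0 & -r/\omega^2 \end{pmatrix},
\]
one checks analyticity and $\det\D(0) = 1$; the identity $N\D = \D\widehat{N}$ follows by direct multiplication, using the determinant relation $p^2+qr=-\omega^2$ exactly once (it is what turns the $(1,2)$ entry of $N\D$ into $-(p^2+qr)/\omega^2 = 1$). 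The main technical point is the evenness of $\M$ in $\omega$; without it the fractions $p/\omega^2$ and $r/\omega^2$ would fail to be analytic, so the non-degeneracy of the band edge is essential to the argument.
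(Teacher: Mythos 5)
Your argument is correct, but it takes a genuinely different route from the paper's. The paper works directly with the eigenvectors $\v_{1,2}$ of $\M$ for the eigenvalues $e^{\pm i\omega}$ and cures their collision at $\omega=0$ by passing to the symmetrized combinations $\u_1=\frac{1}{2}(\v_1+\v_2)$, $\u_2=\frac{1}{2i\omega}(\v_1-\v_2)$; this yields in one step an explicit lower-triangular $\D$ built from the entries $a,b$ of $\M$ and $\sin\omega/\omega$, with $b(\nu_0)\neq 0$ arranged via the non-diagonalizability of $\M(\nu_0)$. You instead split off the scalar part by Cayley--Hamilton, $\M=\cos\omega\,\I+\frac{\sin\omega}{\omega}N$ with $N^2=-\omega^2\I$, reduce the problem to conjugating $N$ into $\widehat{N}$, and produce an upper-triangular $\D$ whose analyticity rests on the evenness of $\M$ in $\omega$, so that the entries $p,r$ of $N$ vanishing at $\omega=0$ are divisible by $\omega^2$. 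Your route is longer but isolates exactly where non-degeneracy of the band edge enters (through the inversion $\nu-\nu_0=\omega^2 g(\omega^2)$ and the resulting evenness); the paper's route bypasses the divisibility argument entirely, because the $\frac{1}{2i\omega}$ rescaling of the antisymmetric eigenvector combination is manifestly analytic. Both proofs rely on the same two facts recorded just before the proposition --- that $\M(\nu_0)$ is a nontrivial Jordan block and that $\nu$ is analytic in $\omega^2$ near the edge (the paper also needs the latter, tacitly, for its $\D$ to be analytic in $\omega$). The computations you indicate all check out: $(N\D)_{12}=-(p^2+qr)/\omega^2=1$ from $\det N=\omega^2$, and $\det\D(0)=-\lim_{\omega\to 0}r/\omega^2=1$ from $qr=-\omega^2-p^2$ with $q(0)=1$ and $p=O(\omega^2)$.
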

\noindent
\textit{Remark.} This proposition is valid for general matrices with the same properties as those mentioned
above for $\M(\nu)$.
\begin{proof}
 Let
\begin{equation}
 \M(\nu) = \left[
 \begin{array}{cc}
 a & b\\
 c & d
 \end{array}
 \right], \quad \frac{1}{2}\, \Tr \M (\nu)= \frac{a+d}{2} = \cos \omega,
\end{equation}
where $\omega=\omega(\nu)$ is real for $\nu >\nu_0$ (in the band) and $\omega=\omega(\nu)$ is complex
for $\nu <\nu_0$ (in the gap). We have that $|b|+|c| > 0$ for $\nu =\nu_0$, since $\M (\nu)$ is not diagonalizable.
We may assume that $b(\nu_0) \neq 0$, and therefore
$|b(\nu)| \geq \delta > 0$ for $|\nu-\nu_0| \leq \epsilon$.

The eigenvectors of matrix $\M$ corresponding to the eigenvalues $ e^{\pm i\omega}$ have the form
\begin{equation}
 \v_{1,2}(\nu) = \left[
 \begin{array}{c}
 -b\\
 a - e^{\pm i\omega}
 \end{array}
 \right].
\end{equation}
Vectors $\v_{1,2}(\nu)$ form a basis for any $\omega > 0$. However, $\v_1 = \v_2$ when $\omega = 0,$ and
the basis degenerates.

In order to avoid the degeneracy, we introduce new basis
\begin{equation}
 \u_1 = \frac{1}{2} (\v_1 + \v_2), \quad
 \u_2 = \frac{1}{2i\omega}(\v_1 - \v_2).
\end{equation}
The new basis is uniformly non-degenerate, and matrix $\M$ in this basis has the form
\begin{equation}
 \left[
 \begin{array}{cc}
 \cos \omega & \frac{\sin \omega}{\omega}\\
 -\omega \sin \omega & \cos \omega
 \end{array}
 \right],
\end{equation}
i.e. \rf{p1} holds with
\begin{equation}
 \D= \left[
 \begin{array}{cc}
 -b & 0\\
 a -\cos \omega & -\frac{\sin \omega}{\omega}
 \end{array}
 \right].
\end{equation}
\end{proof}

\section{Transfer matrix for disordered wave guide}
\label{III}

Here we will show that our main assumption \rf{Mt1} on the form of the transfer matrix holds for
typical random perturbations of
deterministic equations. Two classes of problems will be considered. The first one concerns the
 equation
\begin{equation}
 -\frac{d^2 \psi}{dx^2}+ \left[ \nu^2 A(x) + B(x)\right] \psi = 0, ~~A(x+\ell)=A(x),~B(x+\ell)=B(x),
 \label{eq1}
\end{equation}
and its random perturbation
\begin{equation}
 -\frac{d^2 \psi}{dx^2}+ \left[ \nu^2 A(x) + B(x)\right] \psi + \sigma  \frac{\xd (x)}{\sqrt\Delta} \psi = 0.
 \label{eq2}
\end{equation}
Here $\xd$ is a stationary random process with short
correlation length $\Delta \ll \ell$, $\E \,\xd =0$, which converges to the white noise as $\Delta \to 0$, and
$\sigma \ll \sqrt \Delta$ is a small parameter (strength of disorder).

Let $\psi_1 (x),~\psi_2 (x)$ be the fundamental solutions of equation \rf{eq1} with $\psi_1 (0)=1,\;
\psi_{1}^\prime (0) = 0, \;\psi_2 (0) = 0,\; \psi_2^\prime (0) = 1$,
and  let $\widetilde{\psi}_1 (x),~\widetilde{\psi}_2 (x)$ be similar fundamental solutions of the perturbed
equation \rf{eq2}.
The transfer matrices on the interval $[0,\ell]$ for \rf{eq1}, \rf{eq2} are given by
\begin{equation}
 \M (\nu) =\left[
 \begin{array}{cc}
 \psi_1 (\ell) & \psi_2 (\ell)\\[2mm]
 \psi_1^\prime (\ell) & \psi_2^\prime (\ell)
 \end{array}
 \right],~~~\widetilde{\M} (\nu) =\left[
 \begin{array}{cc}
 \widetilde{\psi}_1 (\ell) & \widetilde{\psi}_2 (\ell)\\[2mm]
 \widetilde{\psi}_1^\prime (\ell) & \widetilde{\psi}_2^\prime (\ell)
 \end{array}
 \right].
\end{equation}
Transfer matrix $\Mt_k (\nu)$ on the interval $[ k\ell,(k+1)\ell ]$ is defined similarly (matrix $\M (\nu)$ does not depend on $k$).
\begin{lemma}
\label{lem1}
 The perturbed transfer matrix $\Mt_k (\nu)$ of equation $\rf{eq2}$ on the interval $[k\ell ,(k+1)\ell ]$ has the form
 \begin{equation}
 \tilde{\M}_k = {\M}\left( \I + \sigma \V_k + O(\sigma^2)\right).
\label{Mt}
\end{equation}
Here matrix $\V_k$ has the form \rf{vk}, $\Tr\V_k=0$, and the covariance matrix $\B_\Delta$ \rf{B} has a
limit as $\Delta \to 0$,
$\B_\Delta = \B + O(\Delta)$, where $\B$ (and therefore $\B_\Delta$ for small $\Delta$) is non-degenerate.
\end{lemma}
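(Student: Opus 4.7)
The plan is to treat \rf{eq2} as a first-order perturbation of \rf{eq1} in the small parameter $\sigma$, apply Duhamel's (variation-of-parameters) formula to the associated $2\times 2$ system, and read off $\V_k$ explicitly.

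First I would rewrite \rf{eq1} and \rf{eq2} as first-order systems for $(\psi,\psi')^{T}$, with perturbation matrix
\[
 P_\Delta(x)=\left[\begin{array}{cc} 0 & 0 \\ \xd(x)/\sqrt\Delta & 0 \end{array}\right].
\]
Let $\M(x,y)$ denote the propagator of the unperturbed system from $y$ to $x$. Then the perturbed propagator satisfies the Volterra identity
\[
 \Mt(x,y)=\M(x,y)+\sigma\int_y^x\M(x,s)\,P_\Delta(s)\,\Mt(s,y)\,ds.
\]
Iterating once, setting $x=(k+1)\ell$, $y=k\ell$, and using $\M(k\ell,s)=\M^{-1}((k+1)\ell,k\ell)\,\M((k+1)\ell,s)$, I obtain
\[
 \Mt_k=\M\bigl(\I+\sigma\V_k+O(\sigma^{2})\bigr),\qquad \V_k=\int_{k\ell}^{(k+1)\ell}\M(k\ell,s)\,P_\Delta(s)\,\M(s,k\ell)\,ds,
\]
with the $O(\sigma^{2})$ remainder acceptable because $\sigma\ll\sqrt\Delta$ makes $\sigma P_\Delta$ small in mean square.

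Next, by $\ell$-periodicity of $A$ and $B$, the propagator on $[k\ell,(k+1)\ell]$ has columns $(\psi_j(s-k\ell),\psi_j'(s-k\ell))^{T}$, $j=1,2$, and the Wronskian identity $\psi_1\psi_2'-\psi_1'\psi_2\equiv 1$ inverts $\M(s,k\ell)$ explicitly. Because $P_\Delta$ has a single nonzero entry, the product $\M(k\ell,s)\,P_\Delta(s)\,\M(s,k\ell)$ evaluates to
\[
 \frac{\xd(s)}{\sqrt\Delta}\left[\begin{array}{cc} -\psi_1\psi_2 & -\psi_2^{2} \\ \psi_1^{2} & \psi_1\psi_2 \end{array}\right]
\]
with $\psi_j=\psi_j(s-k\ell)$, which is manifestly of the form \rf{vk} with $\Tr\V_k=0$; the trace-zero property is also forced by $\det\Mt_k=\det\M=1$. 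The entries $\xi_k,\e_k,\zeta_k$ are therefore stochastic integrals of the deterministic quadratic combinations $-\psi_1\psi_2,\,-\psi_2^{2},\,\psi_1^{2}$ against $\xd/\sqrt\Delta$. Independence across different $k$ follows from the short-range correlation of $\xd$ once $\Delta$ is smaller than the period, and the convergence of $\xd/\sqrt\Delta$ to white noise makes $(\xi_k,\e_k,\zeta_k)$ jointly Gaussian in the limit. Direct computation of the second moments yields $\B_\Delta=\B+O(\Delta)$ with $\B_{ij}=c\int_0^\ell f_i f_j\,ds$, where $c=\lim_{\Delta\to 0}\Delta^{-1}\!\int\E[\xd(0)\xd(r)]\,dr>0$.

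The hard part is the non-degeneracy $\det\B>0$. This reduces to the linear independence of $\psi_1^{2}$, $\psi_2^{2}$, $\psi_1\psi_2$ on $[0,\ell]$, for then $\B$ is a positive scalar multiple of their Gram matrix. Linear independence is immediate from the initial conditions $\psi_1(0)=1,\,\psi_1'(0)=0,\,\psi_2(0)=0,\,\psi_2'(0)=1$: from $\alpha\psi_1^{2}+\beta\psi_2^{2}+\gamma\psi_1\psi_2\equiv 0$, evaluating at $x=0$ gives $\alpha=0$; differentiating and evaluating at $x=0$ gives $\gamma=0$; the remaining identity then forces $\beta=0$. Equivalently, the three functions span the solution space of the symmetric square of the unperturbed ODE, and their Wronskian equals $-2(\psi_1\psi_2'-\psi_1'\psi_2)^{3}\equiv -2$. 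Hence $\det\B>0$, completing the proof.
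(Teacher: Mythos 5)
Your proof is correct and follows essentially the same route as the paper: first-order perturbation theory (Duhamel for the $2\times 2$ system in your version, the scalar Green's-function integral equation in the paper's) produces the identical $\V_k$ with entries $-\eta_{12},-\eta_{22},\eta_{11},\eta_{12}$, and non-degeneracy of $\B$ reduces in both cases to the linear independence of $\psi_1^2,\ \psi_1\psi_2,\ \psi_2^2$ on $[0,\ell]$. Your way of finishing that last step (evaluating the relation and its derivative at $x=0$ using the initial conditions) is a slightly cleaner alternative to the paper's argument via proportionality of $\psi_1$ and $\psi_2$, but it is not a different method.
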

\begin{proof} Without loss of generality we assume that $k=0$ and omit subindex $k$ in the proof.
 We compute $\Mt(\nu)$ using first order perturbation theory. Solutions $\tilde{\psi}_{1,2}$ for the disordered waveguide can be found from the integral equation
\begin{align}
\tilde{\psi}_i (x) &= \psi_i(x) + \sigma \int_0^x G(x,s) \frac{\xd (s)}{\sqrt\Delta} \tilde{\psi}_i (s) \,ds,
\label{psit}
\end{align}
where
\begin{equation}
G(x,s) = \left\{
\begin{array}{cl}
 \psi_1 (s) \psi_2 (x) - \psi_1 (x) \psi_2 (s),& s<x; \\[2mm]
 0, & s>x.
\end{array}
\right.
\end{equation}
Solution of \rf{psit} in the linear approximation has the form
\begin{align}
\tilde{\psi}_i (x) &= \psi_i(x) + \sigma \int_0^x G(x,s) \frac{\xd (s)}{\sqrt\Delta} \psi_i (s) \,ds.
\end{align}
If we put
\begin{equation}
 \eta_{ij}=\int_{0}^\ell \psi_i (s)\, \psi_j (s)\,\frac{\xd (s)}{\sqrt\Delta}\,ds, \quad i,j=1,2,
\end{equation}
then the transfer matrix of the disordered waveguide becomes
\begin{align}
 \tilde{\M} &= \left[
\begin{array}{cc}
 \tilde{\psi}_1 (\ell) & \tilde{\psi}_2 (\ell) \\[2mm]
 \tilde{\psi}_1^\prime (\ell) & \tilde{\psi}_2^\prime
\end{array}
\right] = \left[
\begin{array}{cc}
 \psi_1 (\ell) & \psi_2 (\ell) \\[2mm]
 \psi_1^\prime (\ell) & {\psi}_2^\prime
\end{array}
\right] \nonu \\[2mm]
&+\sigma \left[
\begin{array}{cc}
 \eta_{11}\psi_2 (\ell)-\eta_{12}\psi_1(\ell) & \eta_{12}\psi_2 (\ell) -\eta_{22} \psi_1 (\ell) \\[2mm]
 \eta_{11}\psi_2^\prime (\ell) -\eta_{12} \psi_1^\prime (\ell))  & \eta_{12} \psi_2^\prime (\ell) - \eta_{22}\psi_1^\prime (\ell)
\end{array}
\right]+ O(\sigma^2) \nonu \\[2mm]
& = \M + \sigma \M \left[
\begin{array}{cc}
 -\eta_{12} & -\eta_{22} \\[2mm]
 \eta_{11} & \eta_{12}
\end{array}
\right] + O(\sigma^2)
= \M \left(\I + \sigma \V + O(\sigma^2) \right),
\end{align}
where $\M$ is the unperturbed transfer matrix
and $\V = \left[
\begin{array}{cc}
 -\eta_{12} & -\eta_{22} \\[2mm]
 \eta_{11} & \eta_{12}
\end{array}
\right]$.
Random variables $\e_{ij}$ are asymptotically Gaussian when $\Delta \to 0$ as it follow from the
central limit theorem. Their covariance matrix $\B_\Delta$ \rf{B} has a limit as $\Delta \to 0$,
$\B_\Delta = \B + O(\Delta)$, where $\B$ is expressed
in terms of the fundamental solution as follows
\begin{equation}
{\B} = \left[
\begin{array}{rrr}
\int_0^\ell \psi_1^2 (s) \psi_2^2 (s)\,ds & \int_0^\ell \psi_1 (s) \psi_2^3 (s)\,ds & -\int_0^\ell \psi_1^3 (s) \psi_2 (s)\,ds\\[2mm]
\int_0^\ell \psi_1 (s) \psi_2^3 (s)\,ds & \int_0^\ell \psi_2^4 (s)\,ds & -\int_0^\ell \psi_1 (s) \psi_2^3 (s)\,ds\\[2mm]
-\int_0^\ell \psi_1^3 (s) \psi_2 (s)\,ds & -\int_0^\ell \psi_1 (s) \psi_2^3 (s)\,ds&
\int_0^\ell \psi_1^4 (s)\,ds
\end{array}
\right].
\label{cov}
\end{equation}

Observe that $\B$ is the Gram matrix for the system of functions $-\psi^2_1 (s)$, $\psi_1 (s) \psi_2 (s)$,
$\psi^2_2 (s)$ on $[0,\ell]$.
The degeneracy of $\B$ would imply that this system of functions is
linearly dependent, i.e.,
\begin{equation}
 c_{11}\psi^2_1 (s) + c_{12} \psi_1 (s) \psi_2 (s) + c_{22} \psi^2_2 (s) =0,  \quad s \in [0,\ell],
\end{equation}
for appropriate constants $c_{11},c_{12},c_{22}$. The latter
is possible only if $\psi_1$ and $\psi_2$ are proportional, which
contradicts to linear independence of  $\psi_1$ and $\psi_2$.
Thus, the correlation matrix $\B$ is non-degenerate and so does $\B_\Delta$ for sufficiently small $\Delta$.
\end{proof}

Let us show that \rf{Mt1}-\rf{B} holds for another important class of problems related to the wave propagation in a
layered medium with layer thickness disorder.
Consider a periodic waveguide described by \rf{evp} with $\ell$-periodic piecewise-constant function $n(x)$.
Each period $[k \ell, (k+1)\ell]$ consists of $m$ subintervals of the length $l_i,~\ell=l_1+l_2+\ldots +l_m$, where $n(x)$
is equal to $n_i$ on $i$-th subinterval. Suppose that each subinterval changes randomly to
$l_i(1+\sigma \xi_i)$, where $|\xi_i| < 1$ are independent random variables with densities (for example, $\xi_i$
are uniformly distributed on the interval $(-1,1)$) and $\sigma$ is a small parameter. We assume that
distributions of $\xi_i$ do not depend on $k$ and omit index $k$ below.
Then the transfer matrix of $i$-th subinterval becomes

 \begin{equation}
  {\Mt}_i = \left[
 \begin{array}{cc}
  \cos n_i \nu l_i(1 + \sigma \xi_i) & \dfrac{\sin n_i \nu l_i(1 + \sigma \xi_i)}{n_i \nu} \\[2mm]
  -n_i\nu \sin n_i \nu l_i(1 + \sigma \xi_i) & \cos n_i \nu l_i (1 + \sigma \xi_i)
 \end{array}
 \right] = \M_i (\I + \sigma \V_i + O(\sigma^2)),
 \end{equation}
 where $\M_i = \left. \Mt_1\right |_{\sigma =0}$ is the unperturbed transfer matrix of $i$-th subinterval, and $\V_i$ has the form
 \begin{equation}
 \V_i = l_i \xi_i\left[
 \begin{array}{cc}
  0  &   1 \\[2mm]
  -n_i^2\nu^2   & 0
 \end{array}
 \right], \quad \Tr \V_i =0, \quad i=1,2,\ldots,m.
 \label{V_i}
 \end{equation}
The transfer matrix for the period (consider for simplicity the case $m=3$) is
\begin{align}
\Mt &= \Mt_3 \Mt_2 \Mt_1 = \M_3 (\I + \sigma \V_3) \M_2 (\I + \sigma \V_2)\M_1 (\I + \sigma \V_1) + O(\sigma^2) \nonu \\[2mm]
&= \M_3 \M_2 \M_1 \left[\I + \sigma \left(\V_1 + \M_1^{-1} \V_2 \M_1
+ \M_1^{-1} \M_2^{-1}\V_3 \M_2 \M_1 \right) + O(\sigma^2)\right].
\label{V3}
\end{align}
The last two terms in the parentheses are similar to the matrices $\V_2$ and $\V_3$, respectively,
and therefore by \rf{V_i} $\Tr \V_1=\Tr \M_1^{-1} \V_2 \M_1 = \Tr \M_1^{-1} \M_2^{-1}\V_3 \M_2 \M_1 =0$.
Hence,
\begin{equation}
 \tilde{\M} = {\M}\left( \I + \sigma \,\V + O(\sigma^2)\right),
 \label{Mt1a}
\end{equation}
where $\Tr \V=0$:
\begin{equation}\label{}
\V = \left[
\begin{array}{rr}
\xi & \e\\[2mm]
\zeta & -\xi
\end{array}
\right].
\end{equation}
A similar result is valid for arbitrary number $m$ of subintervals on a single period.
The entries $\xi , \e,\zeta $ are a linear combination of random variables $\xi_i,i=1,2...,m.$

Let us show that the covariance matrix $\B$ of the vector ${\bm \rho}=[\xi , \e,\zeta] $ is non-degenerate. Each matrix $\V_i$ contains only one random variable $\xi_i$, and ${\bm \rho}$ has the form 
${\bm \rho}=\sum_{i=1}^m {\bm a}_i\xi_i$, where ${\bm a}_i$ are three-dimensional vectors  defined by nonrandom
matrices. 


 Without loss of generality we may assume that there are at least three linearly independent vectors
among $\{{\bm a}_i\}$. If $m\geq 3$ one needs only to note that two adjacent subintervals have different
refraction indices $n_i \neq n_{i+1}$. Also, we double the period if $m=2$. Now
\begin{equation}
 \B=\E\left(\sum_{i=1}^m {\bm a}_i\xi_i\otimes\sum_{j=1}^m {\bm a}_j\xi_j\right)=\E\left( \xi_i^2 \right) \sum_{i=1}^m {\bm a}_i\otimes {\bm a}_i ,\quad \det B\neq 0.
\end{equation}

We cannot expect that the law of ${\bm \rho}$ is close to the Gaussian. However, if we choose $N\ell$ as a new period with $N \gg 1$
and assume that $N\sigma^2 \ll 1$ then the distribution of $\sum_{i=1}^{Nm} {\bm a}_i\xi_i$ will be close to the Gaussian.
Thus, an analog of Lemma \ref{lem1} holds for the case of layer thickness disorder.

\section{Comparison theorems for the Lyapunov exponents}
\label{IV}

The goal of this section is to compare the Lyapunov exponents $\gamma$ for the optical
waveguide \rf{eq2} and for the Schr\"{o}dinger equation with the white noise potential. The latter equation
can be obtained from \rf{eq2} with $A=-1$, $B=0$ when $\Delta \to 0$, and it has the form
\begin{equation}
 -\psi^{\prime \prime}(x) +  \sigma \dot{w}(x) \psi(x) = \omega^2 \psi(x).
 \label{sch}
\end{equation}
Here we use different notation for the spectral parameter in order to compare these two problems with $\omega = \omega(\nu)$.
\begin{proposition}
\label{prop2}
Lyapunov exponent $\gamma_1=\gamma_1(\sigma,\nu)$ for optical equation \rf{eq2} with $\nu$ in a neighborhood of
a band edge $\nu_0$ and Lyapunov exponent $\gamma_2=\gamma_2 (\sigma,\omega)$
for  the Schr\"{o}dinger equation \rf{sch} with $\omega=\omega(\nu)$ defined in \rf{cos} and \rf{p1} estimate each other uniformly
with respect to $\sigma$ and $\nu$, i.e.
there exist constants $c_1,c_2$ such
that $c_1 \gamma_2 \leq \gamma_1 \leq c_2 \gamma_2$ for all $0< \sigma < \sigma_0, \; | \nu-\nu_0| <\mu _0$ for some $\sigma_0,\mu_0$.
\end{proposition}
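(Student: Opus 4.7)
The plan is to reduce both Lyapunov exponents to the same universal object---a product of matrices of the form $\widehat{\M}(\omega)\bigl(\I+\sigma\V_k + O(\sigma^2)\bigr)$ with independent, trace-zero, Gaussian $\V_k$ of uniformly non-degenerate covariance---and then exploit the equivalence of the two covariance structures.

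First, I would apply Proposition \ref{prop1}, which produces an analytic $\D(\omega)$ with $\det\D$ bounded away from zero in a neighborhood of $\nu_0$ satisfying $\D^{-1}\M\D=\widehat{\M}(\omega)$. Conjugating \rf{Mt1} yields
\begin{equation*}
\D^{-1}\Mt_k\D = \widehat{\M}(\omega)\bigl(\I + \sigma\widehat{\V}_k + O(\sigma^2)\bigr),\qquad \widehat{\V}_k := \D^{-1}\V_k\D.
\end{equation*}
Because $\D$ and $\D^{-1}$ are uniformly bounded, this conjugation leaves $\gamma_1$ unchanged; moreover each $\widehat{\V}_k$ is trace zero (conjugation preserves trace) and Gaussian, with a covariance that is a bounded, invertible linear image of the non-degenerate $\B_0$, and is therefore itself uniformly non-degenerate.

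Next, I would carry out the first-order perturbation computation of Lemma \ref{lem1} directly for \rf{sch} on a unit interval, where the fundamental solutions are $\cos\omega x$ and $\sin(\omega x)/\omega$. The free transfer matrix is exactly $\widehat{\M}(\omega)$, so the perturbed one-period transfer matrix for the Schr\"{o}dinger problem takes the form $\widehat{\M}(\omega)\bigl(\I+\sigma\V_k^{(s)}+O(\sigma^2)\bigr)$ with a trace-zero Gaussian $\V_k^{(s)}$ whose covariance $\B^{(s)}(\omega)$ is non-degenerate by the same Gram-matrix argument used in Lemma \ref{lem1}. Thus $\gamma_2$ is the Lyapunov exponent of a random product of exactly the same form as the one governing $\gamma_1$, differing only in its specific (but uniformly equivalent) covariance.

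The remaining, and most delicate, step is to pass from equivalence of the covariances to equivalence of the Lyapunov exponents. Since $\widehat{\B}(\nu)$ and $\B^{(s)}(\omega(\nu))$ are continuous, positive definite, and uniformly bounded above and away from zero on $|\nu-\nu_0|<\mu_0$, there exist $\alpha_1,\alpha_2>0$ with $\alpha_1\B^{(s)}\leq\widehat{\B}\leq\alpha_2\B^{(s)}$ as quadratic forms. I would then couple the two Gaussian families by writing the vector with the larger covariance as a bounded linear transformation of the other plus an independent Gaussian completion, and absorb the bounded linear factors into $\sigma$; this sandwiches the optical random product between two Schr\"{o}dinger products with $\sigma$ rescaled by $\alpha_1^{1/2}$ and $\alpha_2^{1/2}$, which yields the desired inequality $c_1\gamma_2(\sigma,\omega)\leq\gamma_1(\sigma,\nu)\leq c_2\gamma_2(\sigma,\omega)$. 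The main obstacle is that the Lyapunov exponent is only an implicit functional of the covariance, and the comparison must be uniform in $\nu$ across two different scaling regimes---$\gamma\sim\sigma^2$ in the bulk and $\gamma\sim\sigma^{2/3}$ near the edge---so that the coupling estimate has to be strong enough to survive the change of scaling, which is ultimately what makes the precise asymptotics of Section 5 relevant here.
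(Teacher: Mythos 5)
Your reduction step is exactly the paper's: conjugate the optical transfer matrices by the $\D(\omega)$ of Proposition \ref{prop1}, observe that a fixed similarity leaves the Lyapunov exponent unchanged, note that the Schr\"{o}dinger transfer matrices are already in the canonical form $\widehat{\M}(\omega)\bigl(\I+\sigma\V_k+O(\sigma^2)\bigr)$ with trace-zero Gaussian $\V_k$ and uniformly non-degenerate covariance, and thereby cast both problems as products of random matrices of the identical structure differing only in the covariance of $\V_k$. Up to that point you and the paper agree. Where you diverge is in the final comparison: the paper does not prove it at all --- it isolates the statement as Lemma \ref{lem2} (Lyapunov exponents of two such sequences with covariances sandwiched between $\lambda_1\I$ and $\lambda_2\I$ estimate each other uniformly) and explicitly defers the proof as ``quite tedious,'' to be published elsewhere. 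You instead sketch a coupling argument, which is more than the paper offers but contains a genuine gap.

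The gap is this: from $\alpha_1\B^{(s)}\leq\widehat{\B}\leq\alpha_2\B^{(s)}$ you can indeed realize $\sqrt{\alpha_2}\,\V_k^{(s)}$ as $\widehat{\V}_k$ plus an independent Gaussian remainder, but that decomposition does not by itself yield an inequality between the Lyapunov exponents of the two products. The Lyapunov exponent is the expectation of $\ln r$ against the invariant measure of the phase diffusion (cf.\ \rf{gamma1}), and it is not an obviously monotone functional of the noise covariance; establishing that adding independent noise increases $\gamma$, uniformly in $\omega$ near the band edge, is precisely the hard content of Lemma \ref{lem2} (the ``stochastic majorization argument'' alluded to in the appendix). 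Moreover, ``absorbing the bounded linear factors into $\sigma$'' is only legitimate when the two covariances are exact scalar multiples of one another, $\widehat{\B}=\alpha\B^{(s)}$; a two-sided operator inequality does not let you replace $\widehat{\V}_k$ by $\sqrt{\alpha}\,\V_k^{(s)}$, since the correlation structure among $\xi_k,\e_k,\zeta_k$ can differ, and the Lyapunov exponent is sensitive to that structure, not merely to the overall scale. Your closing observation --- that a rescaling $\sigma\mapsto\sqrt{\alpha}\,\sigma$ costs only a bounded constant in both the $\sigma^2$ and $\sigma^{2/3}$ regimes --- is correct and would finish the argument \emph{if} the monotonicity step were in place, but as written the central comparison remains unproved, just as it does in the paper.
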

The transfer matrices $\Mt_k$ for the optical problem have the form \rf{Mt1}, where matrix $\M$ is similar to $\widehat \M$ in \rf{p1}:
\begin{equation}
 \D^{-1} \Mt_k (\nu) \D = \left[
 \begin{array}{cc}
 \cos \omega & \frac{\sin \omega}{\omega}\\
 -\omega \sin \omega & \cos \omega
 \end{array}
 \right] \left( \I + \sigma \D^{-1} \V_k \D + O(\sigma^2)
 \right), \quad \omega = \omega(\nu),
 \label{Mo}
\end{equation}
with $\V_k$ satisfying \rf{vk},\rf{B}.

The transfer matrices  $\Mt_k (\nu)$ for the Schr\"{o}dinger equation \rf{sch} also satisfy \rf{Mt1}-\rf{B} with $\M = \widehat{\M}$,
since $A=-1$ and $B=0$:
\begin{equation}
 \Mt_k (\nu) = \left[
 \begin{array}{cc}
 \cos \omega & \frac{\sin \omega}{\omega}\\
 -\omega \sin \omega & \cos \omega
 \end{array}
 \right] \left( \I + \sigma  \V_k  + O(\sigma^2)
 \right).
 \label{}
\end{equation}
Here $\V_k$ are different from those in \rf{Mo}, but both satisfy \rf{vk},\rf{B}. Now, to justify the proposition
one needs only to note that the similarity transformation of all matrices $\Mt_k$ (with the same matrix $\D$)
does not change $\gamma$ (see section \ref{II}) and apply the following
\begin{lemma}
\label{lem2}
Consider two sequences of independent identically distributed random Gaussian matrices
\begin{align}
 {\Mt}_{k}^{(i)} &= \M_0 \left[ \I + \sigma \V_{k}^{(i)} + O(\sigma^2)  \right], \quad \text{where} \nonu \\[2mm]
\V_k^{(i)} & = \left[
\begin{array}{rr}
\xi_k^{(i)} & \e_k^{(i)}  \\[2mm]
\zeta_k^{(i)} & -\xi_k^{(i)}
\end{array}
\right], \quad  \left[
\begin{array}{c}
\xi_k^{(i)} \\
\e_k^{(i)}  \\
\zeta_k^{(i)}
\end{array}
\right]\sim {\cal N} (0, \B^{(i)}), \quad i=1,2.
\end{align}
Correlation matrices $\B^{(1)}$ and $\B^{(2)}$ in both sequences are strictly positively definite matrices.
Denote by $\gamma_1$ and $\gamma_2$ the Lyapunov exponents for the sequences ${\Mt}_{k}^{(1)}$ and
${\Mt}_{k}^{(2)}, \; k = 1,2, \ldots.$
Let $\det \M_0 \neq 0$,  $0 < \sigma \leq \sigma_0$, and $\lambda_1 \I \leq \B^{(1)},\B^{(2)} \leq \lambda_2 \I $ with $\lambda_{1,2} >0$.
 Then there exist constants $c_1,c_2$ depending on $\M_0, \lambda_1, \lambda_2,$ and $\sigma_0$ such
that $c_1 \gamma_2 \leq \gamma_1 \leq c_2 \gamma_2$ uniformly with respect to
$\M_0, \B^{(1,2)},$ and $\sigma$.
\end{lemma}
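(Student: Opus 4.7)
The plan is to reduce the comparison of $\gamma_1$ and $\gamma_2$ to a comparison of the covariance matrices $\B^{(1)}$ and $\B^{(2)}$. For small $\sigma$, each Lyapunov exponent admits an asymptotic expansion
\[
\gamma_i \;=\; \sigma^2\,\Phi(\M_0, \B^{(i)}) + o(\sigma^2),
\]
where $\Phi(\M_0, \cdot)$ is a positive linear functional of its second argument. Granted this expansion, the operator inequality $\lambda_1 \I \le \B^{(i)} \le \lambda_2\I$ immediately yields
\[
\lambda_1\Phi(\M_0,\I) \;\le\; \Phi(\M_0,\B^{(i)}) \;\le\; \lambda_2\Phi(\M_0,\I),
\]
so that $c_1 \gamma_2 \le \gamma_1 \le c_2\gamma_2$ with $c_1,c_2$ depending only on $\lambda_1,\lambda_2,\M_0,\sigma_0$ after absorbing the $o(\sigma^2)$ error for $\sigma \in (0,\sigma_0]$.

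To derive the expansion we first place both sequences on a common probability space: take i.i.d.\ standard Gaussian vectors $W_k \in \R^3$ and let the entries of $\V_k^{(i)}$ be the components of $(\B^{(i)})^{1/2} W_k$, so that $\Mt_k^{(1)}$ and $\Mt_k^{(2)}$ become measurable functions of the same $W_k$. Next, we invoke the Furstenberg--Khasminskii formula
\[
\gamma_i \;=\; \int_{S^{1}} \E\!\left[\ln\|\Mt^{(i)} e(\theta)\|\right] d\mu_\sigma^{(i)}(\theta),
\]
where $\mu_\sigma^{(i)}$ is the stationary distribution on the projective line of the Markov chain of directions induced by the $\Mt_k^{(i)}$, and expand the integrand in powers of $\sigma$. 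The $O(\sigma)$ term is linear in the Gaussian entries and so vanishes in expectation; the $O(\sigma^2)$ contributions---from the quadratic term in the expansion of $\ln$ and from the $O(\sigma^2)$ remainder in $\Mt_k^{(i)}$---produce a quadratic form in the Gaussian entries whose expectation is linear in $\B^{(i)}$. Using the convergence $\mu_\sigma^{(i)} \to \mu_0$ as $\sigma \to 0$, where $\mu_0$ is the invariant measure of the deterministic rotation induced by $\M_0$, yields the claimed expansion with
\[
\Phi(\M_0, \B) \;=\; \int_{S^{1}} \Tr\bigl(\B\, Q(\theta;\M_0)\bigr)\, d\mu_0(\theta)
\]
for an explicit positive-semidefinite matrix $Q(\theta;\M_0)$.

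The main obstacle is two-fold. First, strict positivity $\Phi(\M_0, \I)>0$ must be verified: this follows from Furstenberg's positivity theorem applied to the semigroup generated by the $\Mt_k^{(i)}$, which is strongly irreducible and non-compact whenever the covariance $\B$ is non-degenerate. Second, the $o(\sigma^2)$ error must be made uniform in $\sigma \in (0,\sigma_0]$ and in $\B$ subject to $\lambda_1\I \le \B \le \lambda_2\I$; this reduces to a spectral-gap estimate for the transfer operator of the chain on $S^{1}$, uniform in these parameters. Once that uniformity is secured, the two-sided bound on $\Phi(\M_0, \B^{(i)})$ transfers directly to $\gamma_i$ and completes the proof.
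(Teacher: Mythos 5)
The paper does not actually prove Lemma~\ref{lem2}: it states only that the proof ``is quite tedious and will be published elsewhere.'' So there is no in-paper argument to compare yours against, and I can only assess your proposal on its own terms.

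There is a genuine gap, and it sits exactly where the lemma carries its weight. Your whole argument rests on the expansion $\gamma_i=\sigma^2\Phi(\M_0,\B^{(i)})+o(\sigma^2)$ with $\Phi$ finite and the error uniform. But the lemma is invoked in Proposition~\ref{prop2} uniformly in a neighborhood of a non-degenerate band edge, i.e.\ uniformly as $\M_0=\widehat{\M}(\omega)$ in \rf{p1} approaches the Jordan block at $\omega=0$. In that regime the paper's own asymptotics \rf{nu-edge} give $\gamma\sim\sigma^{2/3}$, so $\gamma/\sigma^2\to\infty$: no expansion of the form $\sigma^2\Phi+o(\sigma^2)$ with bounded $\Phi$ and uniform remainder can hold there. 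Concretely, the coefficient $\Phi(\M_0,\B)=\int\Tr(\B\,Q(\theta;\M_0))\,d\mu_0(\theta)$ you build from the invariant measure $\mu_0$ of the deterministic projective action of $\M_0$ degenerates as $\omega\to 0$ (the parabolic limit has a projective fixed point rather than a rotation, and $\mu_0$ collapses), and the spectral gap of the transfer operator on $S^1$, which you invoke to make the $o(\sigma^2)$ uniform, vanishes in the same limit; the convergence $\mu_\sigma^{(i)}\to\mu_0$ also fails at rational rotation numbers in the bulk (the well-known Kappus--Wegner--type anomalies). So your scheme can at best yield $c_1(\M_0)\gamma_2\leq\gamma_1\leq c_2(\M_0)\gamma_2$ for a fixed elliptic $\M_0$ with a good (irrational) rotation number, with constants that blow up at the band edge --- which is precisely the uniformity in $\M_0$ that the lemma asserts and that Proposition~\ref{prop2} requires. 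The two ``obstacles'' you list at the end (strict positivity of $\Phi(\M_0,\I)$ and uniformity of the error) are therefore not finishing touches but the entire content of the lemma, and the perturbative route you propose cannot close them in the band-edge regime; a proof would have to compare the two random products directly (e.g.\ by a stochastic domination/majorization of the associated phase diffusions, as hinted in the paper's Appendix), not through a common $\sigma^2$ coefficient. (A minor additional remark: the coupling of the two sequences through a common Gaussian vector $W_k$ plays no role in your argument, since the $\gamma_i$ are almost-sure constants.)
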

The lemma has obvious physical meaning: the stronger the disorder in the waveguide, the greater is its
Lyapunov exponent. However, mathematical proof of the lemma is quite tedious and will be published elsewhere.

\section{The white noise model}
\label{V}

In the previous section we reduced calculation of the Lyapunov exponent to the product of random matrices
\begin{equation}
 \Mt (\omega) = \M_0 \left[ \I + \sigma \V_k + O(\sigma^2) \right],
\end{equation}
where
\begin{equation}
 \M_0 = \left[
 \begin{array}{cc}
 \cos \omega\ell & \frac{\sin \omega\ell}{\omega} \\[2mm]
 -\omega \sin \omega\ell & \cos \omega\ell
 \end{array}
 \right],
\end{equation}
and $\V_k, k=1,2,\ldots,$ are independent identically distributed Gaussian matrices. The exact calculation of the Lyapunov exponent for the product of random matrices is, unfortunately, quite a challenging problem. However, it can be calculated for the Schr\"{o}dinger equation with the white noise potential using the Ito calculus. To this end, we consider the model equation
\begin{equation}
-\psi^{\prime \prime}(x) + \sigma \dot{w}(x) \psi = \omega^2 \psi(x),
\quad t \in \R,
\end{equation}
where $\dot{w}(x)$ is the standard white noise. Since $\dot{w}(x)$ is a generalized function, we have to define solution of the last equation. We approximate $\dot{w}(x)$
by a piecewise-constant process
\begin{equation}
 \xi_\Delta (x) = \frac{\sigma \xi_n}{\sqrt{\Delta}}, \quad x\in [n\Delta, (n+1)\Delta],
 \quad n=0,\pm 1, \pm 2, \ldots,
\end{equation}
where $\xi_n$ are independent normal random variables with $\E \,\xi =0,$ ${\sf Var}\, \xi =1$. The transfer matrix over the interval $n\Delta$ has the form
\begin{align}
 \M(n\Delta) =& \prod_{k=0}^{n-1}\left[
 \begin{array}{cc}
  \cos \left( \sqrt{\omega^2 - \frac{\sigma \xi_k}{\sqrt{\Delta}}}\,\Delta\right) & \dst \frac{\omega\sin \left( \sqrt{\omega^2 -\frac{\sigma \xi_k}{\sqrt{\Delta}}}\, \Delta\right)}{\sqrt{\omega^2 -\frac{\sigma \xi_k}{\sqrt{\Delta}}}} \\[2mm]
  \dst -\frac{\sqrt{\omega^2 -\frac{\sigma \xi_k}{\sqrt{\Delta}}}\sin \left( \sqrt{\omega^2 -\frac{\sigma \xi_k}{\sqrt{\Delta}}}\, \Delta\right)}{\omega} &  \cos \left( \sqrt{\omega^2 - \frac{\sigma \xi_k}{\sqrt{\Delta}}}\,\Delta\right)
 \end{array}
 \right] \nonu \\[2mm]
=& \prod_{k=0}^{n-1}\left[
 \begin{array}{cc}
  1 & \omega \Delta \\[2mm]
  \dst -\omega \Delta + \frac{\sigma \xi_k \sqrt{\Delta}}{\omega} & 1
 \end{array}
 \right] + O\left( \Delta^{3/2} \right).
\label{trans_asy}
\end{align}
If we introduce the polar coordinates
\begin{equation}
\begin{array}{l}
\psi (k\Delta) = r (k\Delta) \sin \theta (k\Delta), \\[2mm]
\psi^\prime (k\Delta) = \omega r (k\Delta) \cos \theta (k\Delta),
\end{array}
\label{pol}
\end{equation}
then the values of $\psi \left((k+1)\Delta\right)$ and $\omega^{-1} \psi^\prime ((k+1)\Delta)$
are related through the transfer matrix \rf{trans_asy} as follows
\begin{align}
 r \left((k+1)\Delta\right) & \left[
\begin{array}{l}
\sin \theta ((k+1)\Delta) \\[2mm]
\cos \theta ((k+1)\Delta)
\end{array}
\right] =
\left[
 \begin{array}{cc}
  1 & \omega \Delta \\[2mm]
  \dst -\omega \Delta + \frac{\sigma \xi_k \sqrt{\Delta}}{\omega} & 1
 \end{array}
 \right] r \left(k\Delta\right)
\left[
\begin{array}{l}
\sin \theta (k\Delta) \\[2mm]
\cos \theta (k\Delta)
\end{array}
\right] \nonu \\[2mm]
&= r \left(k\Delta\right) \left[
\begin{array}{c}
\sin \theta (k\Delta) + \omega \Delta \cos \theta (k\Delta) \\[2mm]
\cos \theta (k\Delta) + \dst \left(-\omega \Delta + \frac{\sigma \xi_k \sqrt{\Delta}}{\omega} \right) \sin \theta (k\Delta)
\end{array}
\right].
\label{rel1}
\end{align}
Denoting
\begin{equation}
 z(k\Delta) = \cot \theta(k\Delta)
\end{equation}
we derive from \rf{rel1}
\begin{align}
 z((k+1)\Delta) &= \frac{\dst z(k\Delta) -\omega \Delta + \frac{\sigma \xi_k \sqrt{\Delta}}{\omega}}{1+\omega \Delta z(k\Delta)} \nonu \\[2mm]
&= \left[ \dst z(k\Delta) -\omega \Delta + \frac{\sigma \xi_k \sqrt{\Delta}}{\omega}\right]
\left[ 1- \omega \Delta z(k\Delta) + O(\Delta^2) \right] \nonu \\[2mm]
&= z(k\Delta) -\omega \Delta + \frac{\sigma \xi_k \sqrt{\Delta}}{\omega} - \omega \Delta z^2 (k\Delta) + O(\Delta^{3/2}).
\label{z_delta}
\end{align}
Taking in \rf{z_delta} the limit $\Delta \to 0$ we obtain a stochastic differential equation for the limiting phase $\theta(x)$ corresponding to the white noise potential
\begin{equation}
 \frac{dz(x)}{dx}=\frac{\sigma}{\omega}\, \dot{w}(x) - \omega \,(1+z^2(x)).
 \label{phase}
\end{equation}
Similarly, for the amplitude $r(x)$ we obtain from \rf{rel1}
\begin{align}
 &r^2((k+1)\Delta) = r^2 (k\Delta)\Biggl[ \left(\sin \theta (k\Delta) + \omega \Delta \cos \theta (k\Delta)\right)^2  \nonu \\[2mm]
&+  \left( \cos \theta (k\Delta) + \Bigl(-\omega \Delta + \frac{\sigma \xi_k \sqrt{\Delta}}{\omega} \Bigr) \sin \theta (k\Delta)\right)^2 \Biggr] \nonu \\[2mm]
&= r^2 (k\Delta)\left[1 + \frac{2\sigma \xi_k \sqrt{\Delta}}{\omega} \sin \theta(k\Delta) \cos \theta(k\Delta) +  \frac{\sigma^2 \xi^2_k \Delta}{\omega^2} \sin^2 \theta(k\Delta) + O(\Delta^{3/2})\right].
\label{rel2}
\end{align}
Taking the logarithm of both sides of \rf{rel2} and sending $\Delta \to 0$ we obtain a stochastic differential equation for the amplitude $r(x)$
\begin{equation}
 \frac{d\ln r(x)}{dx}=\frac{1}{2\omega} \sin 2\theta(x)\, \dot{w}(x) -\frac{\sigma^2}{2\omega^2}
 \sin^2 \theta(x) \cos 2\theta(x).
\end{equation}
Equation \rf{phase} defines generator $\cal L$ of the diffusion process $z(x)$
\begin{equation}
 {\cal L} = \frac{\sigma^2}{2\omega^2}\,\frac{d^2}{dz^2} -\omega(1+z^2) \frac{d}{dz}.
\end{equation}
The distribution density $P(t,z_1,z_2)$ of the diffusion process satisfies the equation
\begin{equation}
 \frac{\partial P}{\partial t}= \L P \quad \text{with} \quad \left. P \right|_{t=0} = \delta(z_1 - z_2).
\end{equation}
The limiting distribution density $p$ is then given by
\begin{equation}
 \L^\ast \,p = 0
\end{equation}
or
\begin{equation}
 \frac{\sigma^2}{2\omega^3}\,\frac{d^2 p}{d z^2} + \frac{d}{d z}\left((1+z^2)\,p \right)=0.
\end{equation}
Solution of this equation has the form
\begin{equation}
 p(z) = C e^{-\Phi(z)}\int_{-\infty}^z e^{\Phi(t)} \,dt, \quad \Phi(z) = \lambda \left(z + \frac{z^3}{3}\right), \quad \lambda = \frac{2\omega^3}{\sigma^2},
\label{p}
\end{equation}
where $C$ is a normalizing constant
\begin{equation}
 \frac{1}{C} = \int_{-\infty}^\infty  e^{-\Phi(z)} \,dz\int_{-\infty}^z e^{\Phi(t)}\,dt  = \sqrt{\frac{2\pi}{\lambda}}\int_0^\infty \frac{e^{-2\lambda \left(x+\frac{x^3}{3} \right)}}{\sqrt{x}}\,dx.
\end{equation}
Now we can calculate the Lyapunov exponent
\begin{align}
 \gamma &= \lim_{L \to \infty} \frac{\E \ln(r(L)}{L}= -\frac{\sigma^2}{2\omega^2}
 \int_0^\pi p(\theta) \sin^2 \theta \cos 2\theta \,d\theta \nonu \\[2mm]
 & = \frac{\sigma^2}{2\omega^2}\, C \int_{-\infty }^{\infty }\frac{1-x^{2}}{(1+x^{2})^{2}}
e^{-\Phi (x)}dx\int_{-\infty }^{x}e^{\Phi (y)}dy.  \label{gamma1}
\end{align}

Rigorous mathematical analysis of dependence of $\gamma$ on $\omega$ and $\sigma$ is quite cumbersome and will be published elsewhere. Below we give
the summary of the result.

\subsection{Frequency near the band edge}

Let $\omega_0 \neq 0$ be a band edge that separates a gap and a non-degenerate band (i.e. point A in Fig. \ref{bands}), and let frequency $\omega$ be in the band and close to the band edge $\omega_0$ such that $|\omega - \omega_0| = \omega_\Delta \ll 1$. Then for small $\sigma$ and $\omega_\Delta$
\begin{equation}
 \gamma = C_1 \sigma^{\frac{2}{3}} \;\; \text{provided} \;\; \omega_\Delta^{\frac{3}{2}} \ll \sigma^2,
\label{nu-edge}
\end{equation}
where $C_1$ is a constant. We illustrate this dependence numerically when a small disorder is introduced in the lengths $l_1$ and $l_2$ of the waveguide in Section 3. For propagation frequency $\omega = 5.6288$ corresponding to the left edge of the band (point A in Fig. \ref{bands})
we calculated the transmission coefficient after $1000$ periods and then using \rf{t-gamma} we obtained the Lyapunov exponent $\gamma$ averaging the last $500$ periods. Results of the numerical simulation are shown in Fig. \ref{slopes} by circles. The approximating line has equation
\begin{equation}
 \lg \gamma = \frac{2}{3} \lg \sigma + 0.15
\end{equation}
which is in excellent agreement with the asymptotics \rf{nu-edge}.
The same dependence was confirmed numerically in \cite{MO07}.

\begin{figure}[ht]
\begin{center}
\includegraphics[width=0.8\textwidth, angle=0]{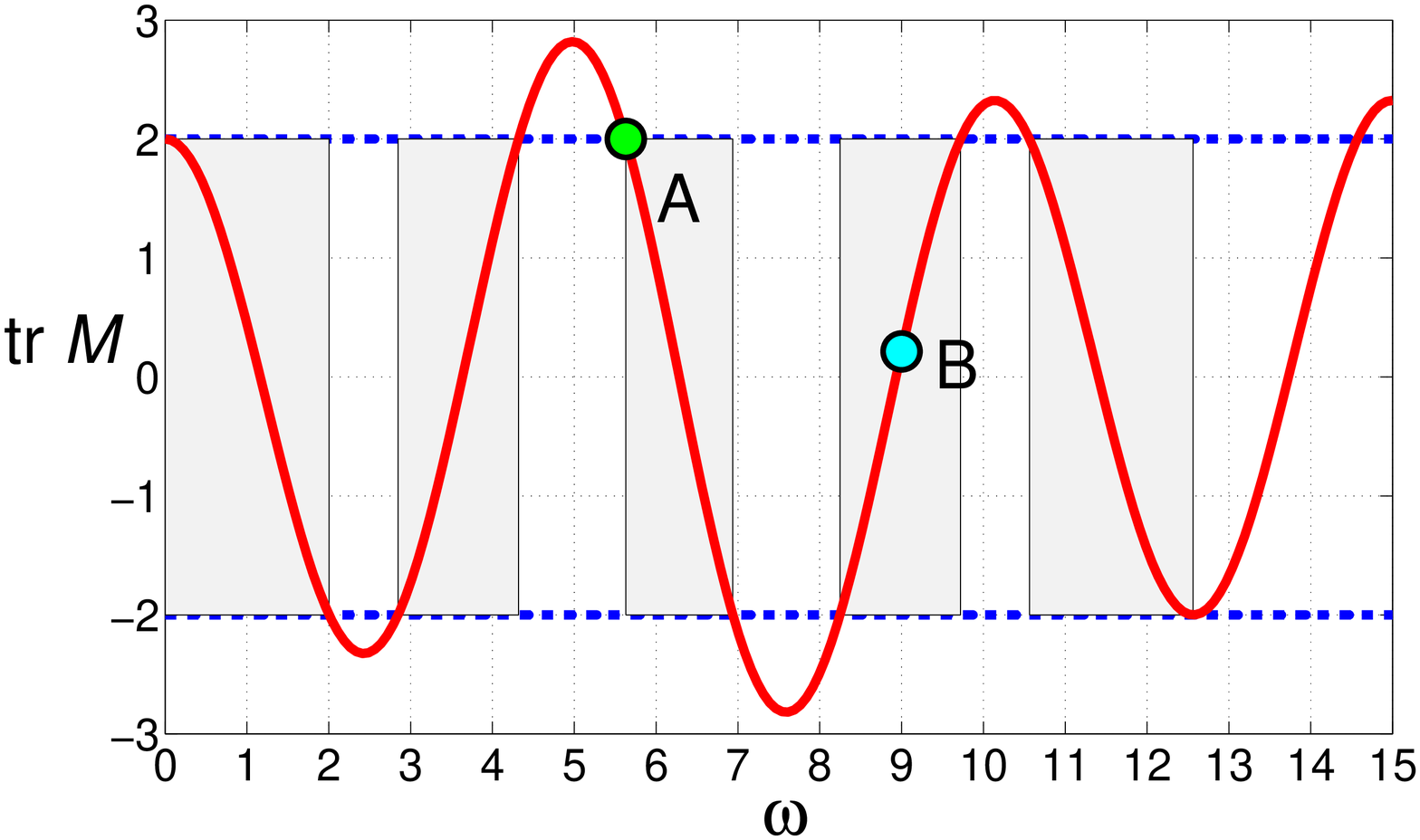}
\end{center}
\caption{The discriminant curve and bandgap structure of model 1 with $n_1=1,\;n_2=2.5,\;l_1=1,\;l_2=0.1$. Point A corresponds to the left band edge $\omega = 5.6288$ while point B with $\omega = 9$ lies in the middle of the band.}
\label{bands}
\end{figure}

\begin{figure}[ht]
\begin{center}
\includegraphics[width=0.8\textwidth, angle=0]{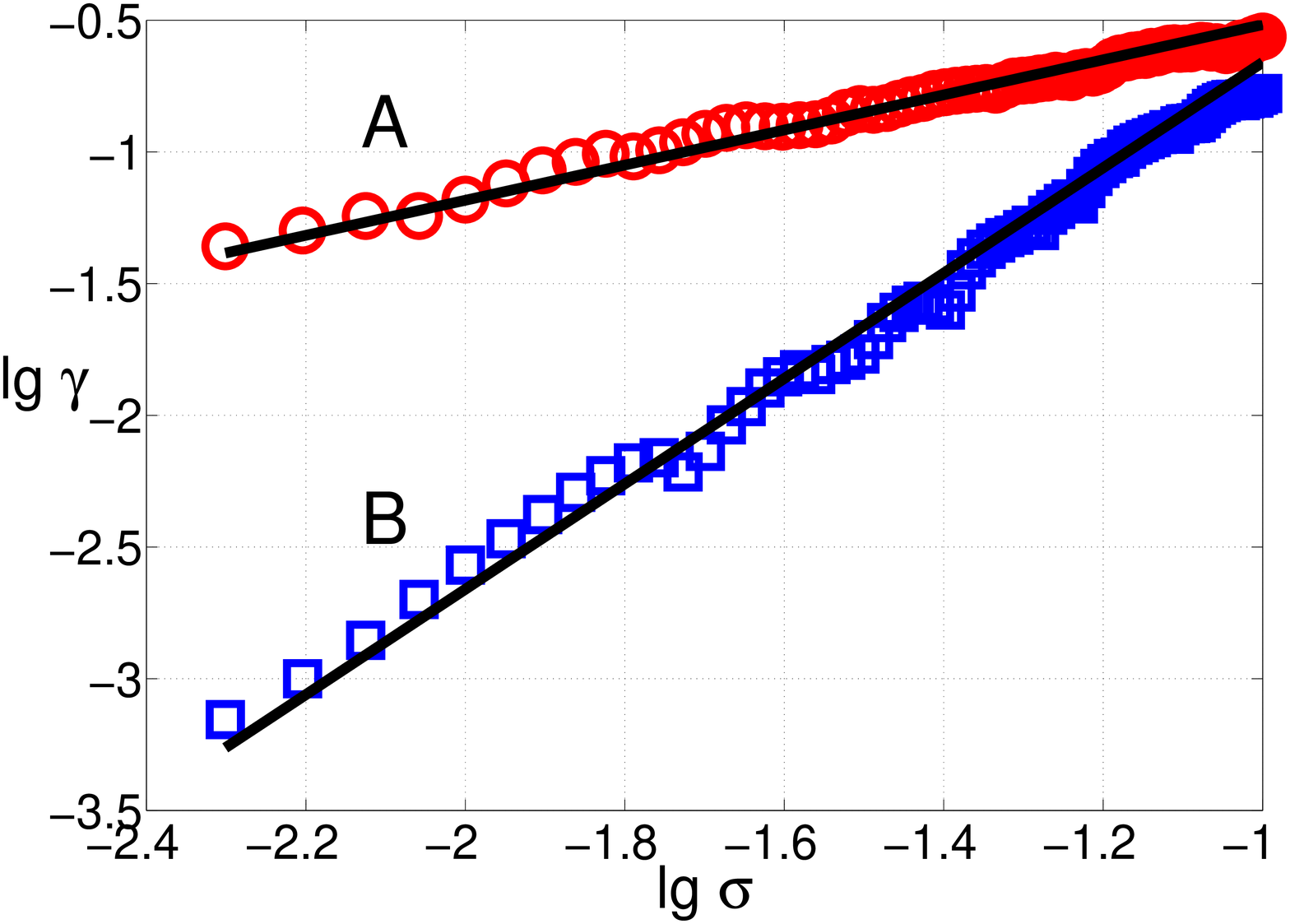}
\end{center}
\caption{Numerical simulation of dependence of the Lyapunov exponent $\gamma $ on the strength of disorder $\sigma$. Circles and squares correspond to perturbation of points A and B in Fig. \ref{bands}, respectively, and approximated by the lines with the slopes $2/3$ and $2$.}
\label{slopes}
\end{figure}

\subsection{Frequency in the bulk of the band}

If frequency $\omega$ is located in the band far from the band edges then the character of asymptotics of $\gamma$ is completely different. In this case
\begin{equation}
\gamma = C_2 \sigma^2 \;\; \text{provided} \;\; \sigma \ll 1.
\label{nu-bulk}
\end{equation}
Here $C_2$ is a constant independent of $\sigma$. We illustrate this dependence numerically for frequency $\omega=9$ (point B in Fig. \ref{bands}). The result is shown in Fig. \ref{slopes} by squares. Dependence of $\lg \gamma$ on $\lg \sigma$ is then approximated by the line
\begin{equation}
 \lg \gamma = 2 \lg \sigma + 1.34
\end{equation}
which is well consistent with the asymptotics \rf{nu-bulk}.

\section{Conclusions}
\label{VI}
We have studied the influence of disorder on transmission through periodic
waveguides described by the Schr\"{o}dinger and optical equations. While the bands of the Schr\"{o}dinger operator have regular structure, it is shown that the bands of the optical operator do not possess this property, and the lengths of its gaps exhibits a chaotic behavior as the frequency increases.
Assuming that the transfer matrix of perturbed waveguide has the form $\tilde{\M}_k = {\M}\left( \I + \sigma \,\V_k + O(\sigma^2)\right)$ we have shown
that this representation is valid for two particular models of disordered waveguides. 
Using the result that Lyapunov exponents of systems with transfer matrices in
the canonical form estimate each other, 
we have estimated the Lyapunov exponent of the optical waveguide through that of the Schr\"{o}dinger equation with the white noise potential.
Exact dependence of the Lyapunov exponent $\gamma$ was established as a function of frequency $\omega$ and intensity of the disorder $\sigma$.
When the frequency $\omega$ lies in the bulk of the band then $\gamma \sim \sigma^2$, while near the band edge $\gamma$ has the order $\gamma \sim \sigma^{2/3}$. Thus, small disorder drastically reduces transmission of the waveguide if the frequencies are located near non-degenerate band edges.

\end{document}